\renewcommand\footnotetextcopyrightpermission[1]{}
\newtheorem{theorem}{Theorem}
\newtheorem{lemma}[theorem]{Lemma}
\newtheorem{claim}[theorem]{Claim}
\newcommand{\FOT}{\textsc{FOT}}
\newcommand{\FO}{\textsc{FO}}
\newcommand{\FI}{\textsc{FI}}
\newcommand{\D}{d}
\newcommand{\RD}{\mathit{rd}}
\newcommand{\assign}{$~\leftarrow~$}
\begin{document}

\title{Irredundant Buffer and Splitter Insertion and Scheduling-Based Optimization for AQFP Circuits}

\author{Siang-Yun Lee}
\affiliation{
  \institution{EPFL, Switzerland}
}

\author{Heinz Riener}
\affiliation{
  \institution{EPFL, Switzerland}
}

\author{Giovanni De Micheli}
\affiliation{
  \institution{EPFL, Switzerland}
}

\begin{abstract}
The adiabatic quantum-flux parametron (AQFP) is a promising energy-efficient superconducting technology. Before technology mapping, additional buffer and splitter cells need to be inserted into AQFP circuits to fulfill two special constraints: (1)~Input signals to a logic gate need to arrive at the same time, thus shorter paths need to be delayed with buffers. (2)~The output signal of a logic gate has to be actively branched with splitters if it drives multiple fanouts. Buffers and splitters largely increase the area and delay in AQFP circuits. Na\"ive buffer and splitter insertion and light-weight optimization using retiming techniques have been used in related works, and it is not clear how much space there is for further optimization.  In this paper, we develop (a) a linear-time algorithm to insert buffers and splitters irredundantly, and (b) optimization methods by scheduling and by moving groups of gates, called chunks, together. Experimental results show a reduction of up to $39\%$ on buffer and splitter cost. Moreover, as the technology is still developing and assumptions on the physical constraints are not clear yet, we also discuss the impacts of different assumptions with experimental results to motivate future research on AQFP register design.
\end{abstract}
\keywords{AQFP, superconducting electronics, path balancing, combinational circuit, scheduling}
\maketitle

\section{Introduction}\label{sec:intro}
Superconducting electronics is an emerging domain arising from the demand for ultra-low power consumption.
Among various superconducting logic families, the \emph{adiabatic quantum-flux parametron}~(AQFP)~\cite{takeuchi13aqfp} is a technology featuring zero static energy consumption and very small switching energy dissipation.
Two of the challenges in AQFP circuit design come from the \emph{path-balancing} and \emph{fanout-branching} requirements which are not needed in traditional CMOS logic circuits.

\textit{Path-balancing:} The AQFP gates are AC-biased. Each AQFP gate receives an alternating excitation current to periodically release its output signal and reset its state. All AQFP clocking schemes~\cite{takeuchi17fourphase,takeuchi19delayline} require that the input signals of a logic gate be released at the previous clocking phase. In other words, all data paths must be of the same length. Whereas shortening longer paths is not always possible, \emph{buffers} need to be inserted to delay shorter paths. 

\textit{Fanout-branching:} In the AQFP technology, logical~$0$ and~$1$ are represented with different current directions. As the output current of an AQFP gate is limited, it has to be amplified by a \emph{splitter} before branching into multiple fanouts. AQFP splitters are also clocked.

As the research at the physical level rapidly develops and the fabrication capability grows for larger and more complex circuits, design automation tools specialized for AQFP are in need. Pioneering works attempt to adapt existing tools to fulfill the path-balancing and fanout-branching constraints with post-synthesis modifications and optimization.
In~\cite{ayala2020semi} and~\cite{cai2019buffer}, after classical logic synthesis, path-balancing buffers and fanout-branching splitters are inserted separately and then retiming-like algorithms are applied to reduce the buffer and splitter cost locally. A majority-based logic synthesis flow considering AQFP buffer and splitter costs is proposed in~\cite{testa2021algebraic}, which emphasizes on reducing circuit depth and restricting the increase of fanout count. In~\cite{marakkalage2021}, consideration of the balancing and branching constraints is integrated in exact-synthesis-based rewriting.
However, in the results of these works, buffers and splitters still make up for over $50\%$, and sometimes up to $80\%$, of the total cost. 
Moreover, as the technology is still being developed, the physical constraints are ever-changing and assumptions on the requirements vary across different works and are often unclear, which makes them difficult to compare with. For example, whether primary inputs and primary outputs need to be path-balanced and fanout-branched depends on the design of interfacing registers, which is still under development~\cite{saito2021logic}.

While the path-balancing constraint also exists for the \emph{rapid single-flux-quantum} (RSFQ) technology and optimization methods have been researched~\cite{pasandi2018pbmap}, an important distinction is that RSFQ splitters are not clocked but AQFP splitters are. For this reason, fanout-branching has to be considered together with path-balancing in AQFP, which makes the problem more complicated. If only path balancing needs to be done, as in RSFQ, the optimal way of inserting buffers without changing the logic structure can be found in linear time. However, buffer and splitter insertion in AQFP is non-trivial even without logic optimization.
Thus, we limit our investigation to the problem of AQFP buffer and splitter insertion without logic transformation. 

In this paper, observations about the complexity of the defined problem and systematic methods to deal with it are presented, and the impact of the technology assumptions are experimented and discussed.
In \Cref{sec:buf-count}, a linear-time algorithm to count irredundant buffers is presented, which subsumes the retiming and optimization techniques proposed in~\cite{cai2019buffer}. Different than the previous work, we consider buffers and splitters together and count them in an irredundant way such that the ``optimizations'' in~\cite{cai2019buffer} are considered without extra effort. Also, we observe that the irredundant construction is not optimal yet. On top of the locally irredundant buffer and splitter insertion, efforts have to be made in finding a suitable depth assignment to logic gates to achieve the global optimum. In \Cref{sec:optimize}, methods to obtain an initial depth assignment and to adjust it incrementally are presented. To escape from local minimum, we propose to move groups of gates together as a chunk. Experimental results show that obtaining better depth assignments using the proposed methods reduces the number of buffers by up to $39\%$. Moreover, various possible technology assumptions are first discussed in \Cref{sec:assumptions} and then experimented in \Cref{sec:exp}. The results suggest that branching and balancing of primary inputs have greater impacts on the buffer count of about $50\%$ and $30\%$, respectively.

\section{Background}\label{sec:background}
\subsection{Adiabatic Quantum-Flux Parametron}\label{subsec:aqfp}
The \emph{adiabatic quantum-flux parametron}~(AQFP) is an emerging superconducting technology shown to achieve promising energy efficiency.~\cite{takeuchi13aqfp} The basic circuit components in this technology are the buffer cell and the branch cell. The majority-$3$ logic gate can be constructed by combining three buffer cells with a $3$-to-$1$ branch cell, from which other logic gates, such as the AND gate and the OR gate, can be built with constant cells (biased buffer cells). Input negation of logic gates is realized using a negative mutual inductance and is of no extra cost.~\cite{takeuchi15library} The commonly-used cost metric of AQFP circuits is the \emph{Josephson junction}~(JJ) count. A buffer costs $2$ JJs and a majority-$3$ gate costs $6$ JJs.

Logic gates in an AQFP circuit need to be activated and deactivated periodically by an excitation current.~\cite{takeuchi17fourphase} In other words, every gate in an AQFP circuit is clocked, and all input signals have to arrive at the same clock cycle. To ensure this, shorter data paths need to be delayed with clocked buffers. Moreover, the output signal of AQFP logic gates cannot be directly branched to feed into multiple fanouts. Instead, splitters are placed at the output of multi-fanout gates to amplify the output current. A splitter cell is composed of a buffer cell and a $1$-to-$n$ branch cell (usually, $2 \leq n \leq 4$) and is also clocked. As the cost of splitters comes mostly from the buffer cells, in the remaining of this paper, we do not distinguish buffers and splitters and will model them with the same abstract data structure.

\subsection{Terminology}\label{subsec:terminology}
A \emph{(logic) network} is a directed acyclic graph defined by a pair $(V, E)$ of a set $V$ of nodes and a set $E$ of directed edges. The node set $V = I \cup O \cup G$ is disjointly composed of a set $I$ of \emph{primary inputs} (PIs), a set $O$ of \emph{primary outputs} (POs), and a set $G$ of \emph{(logic) gates}. Each PI has in-degree $0$ and unbounded out-degree, whereas each PO has in-degree $1$ and out-degree $0$. The out-degree of each gate is unbounded and the in-degree is a fixed number depending on the type of the gate. 
For any gate $g \in G$, the \emph{fanins} of $g$, denoted as $\FI(g)$, is the set of gates and PIs connected to $g$ with an incoming edge. Similarly, the \emph{fanouts} of $g$, denoted as $\FO(g)$, is the set of gates and POs connected to $g$ with an outgoing edge. Fanouts are also defined for PIs.

A \emph{mapped network} $N^\prime$ is a network whose node set $V^\prime$ is extended with a set $B$ of \emph{buffers}. A buffer is a node with in-degree $1$. In a mapped network, the definition of the fanouts of a gate is modified by ignoring any intermediate buffers, i.e., a path from a gate $g$ to one of its fanouts $g_o \in \FO(g) \subset (G \cup O)$ may include any number of buffers in $B$, but never another gate in $G - \{g, g_o\}$. The definition of fanins is modified similarly.
The \emph{fanout tree} of a gate $g$, denoted by $\FOT(g)$, is the set of buffers between $g$ and any gate or PO in $\FO(g)$. Fanout trees are also defined for PIs.

For each node $n$ in a network, the \emph{depth} of $n$, denoted by $\D(n)$, is a non-negative integer assigned to $n$. The depth of a network $N = (V = I \cup O \cup G, E)$ is defined as
\begin{equation}
    \D(N) = \max\limits_{o \in O} \D(o).
\end{equation} 
Moreover, the \emph{relative depth} between a PI or a gate $n \in (I \cup G)$ and one of its fanout $n_o \in \FO(n) \subset (G \cup O)$, is denoted and defined as 
\begin{equation}
    \RD(n, n_o) = \D(n_o)-\D(n).
\end{equation} 
Note that relative depth is only defined among PIs, gates, and POs.

\section{Technology Assumptions}\label{sec:assumptions}
To fulfill the needs in the AQFP technology for fanout-branching and path-balancing, we define the following two properties for a mapped network $N^\prime = (V^\prime = I \cup O \cup G \cup B, E^\prime)$ with a depth assignment. Given the \emph{splitting capacities} $s_i, s_g, s_b$ of each type of node, 
\begin{enumerate}
    \item $N^\prime$ is \emph{path-balanced} if 
    \begin{align}
        &\forall n_1, n_2 \in V^\prime : (n_1, n_2) \in E^\prime \Rightarrow \D(n_1) = \D(n_2) - 1\text{, } \\
        &\forall i \in I : \D(i) = 0\text{, and}\label{eqn:balance-pi} \\
        &\forall o \in O : \D(o) = \D(N^\prime)\label{eqn:balance-po}.
    \end{align}
    \item $N^\prime$ is \emph{properly-branched} if every PI has an out-degree no larger than $s_i = 1$, every gate has an out-degree no larger than $s_g = 1$, and every buffer has an out-degree no larger than $s_b$.
\end{enumerate}

An (unmapped) network $N$ with a depth assignment is said to be \emph{legal} if a path-balanced and properly-branched mapped network $N^\prime$ can be extended from $N$.

Logic networks defined in \Cref{subsec:terminology} model the combinational parts of digital circuits. In practice, PIs of a network are usually provided by the register outputs of the previous sequential stage and POs are connected to the register inputs of the next stage. Depending on how the registers are implemented, different assumptions on whether PIs and POs need to be path-balanced or branched may arise. 

\subsection{Path-Balancing of PIs} 
It is possible to design registers that can hold and output its value at every clock cycle. In this case, the PI nodes in our model can be placed at any depth, i.e., condition~\ref{eqn:balance-pi} is removed.

\subsection{Path-Balancing of POs} 
In most related works, it is assumed that all PO signals must arrive at the register inputs at the same clock cycle. That is, POs are path-balanced to ensure robust operations. If the PI values are always available and stable until the next register update, shorter paths from PI to PO simply compute the same result repeatedly in the later cycles when longer paths are still computing. In this case, shorter paths do not have to be aligned with the longest path (the critical path). In other words, the PO nodes in our model are no longer limited to be placed at the same depth, i.e., condition~\ref{eqn:balance-po} is removed.
However, there may still be constraints on the PO depths depending on the clocking scheme used. For example, a $4$-phase clocking scheme~\cite{takeuchi17fourphase} may require that the depths of PO nodes must be a multiple of $4$ because the registers can only take inputs in one of the four clock phases.

\subsection{Branching of PIs}
When a register drives multiple outputs, we may or may not need to insert splitters to ensure a large enough current, depending on the physical implementation of the register. If the registers are capable of producing large current, $s_i$ can be set to infinity. Otherwise, it is also possible to duplicate the frequently-used PIs in the register file to avoid deep splitter trees, or to design special large-capacity buffers having a higher $s_b$ value and use them for PIs with many fanouts.

\subsection{Branching and Inversion of POs}
If a gate output feeds into multiple registers, then splitters are always needed. If the negated output of a majority gate is required by the next sequential stage, we can push the output inversion to the gate's inputs because the majority function is self-dual~\cite{Muroga61} and input negation is for free in AQFP. However, if a gate output is needed by the next stage once in the regular form and once in the negated form, then we not only need a splitter, but also an additional NOT gate made of an input-negated buffer.

\subsection{Problem Formulation}\label{subsec:problem}
In this paper, we focus on the problem of AQFP buffer insertion after logic synthesis without changing the structure of the original network, formulated as follows:

Given a network $N = (V = I \cup O \cup G, E)$ and the value of the parameter $s_b$, find a mapped network $N^\prime = (V^\prime = I \cup O \cup G \cup B, E^\prime)$, such that:
\begin{enumerate}
    \item $N^\prime$ is path-balanced and properly-branched.
    \item For all gates $g \in G$, $\FO(g)$ and $\FI(g)$ remain the same in $N^\prime$ as in $N$.
    \item $|V^\prime|$ is minimized. Since $V^\prime = V \cup B$, it is equivalent to $|B|$ being minimized.
\end{enumerate}
We call such $N^\prime$ a \emph{minimum} mapped network for $N$.

\section{Irredundant Buffer Insertion}\label{sec:buf-count}
A mapped network is said to be \emph{irredundant} if the following two conditions hold.
\begin{enumerate}
    \item There is no dangling buffer, i.e., every buffer has at least one outgoing edge.
    \item There does not exist any pair of two buffers whose incoming edges are connected from the same node and both of them have out-degrees smaller than $s_b$.
\end{enumerate}

We consider only irredundant networks in the remaining of this paper. In this section, we will explain how the problem formulated in \Cref{subsec:problem} can be approached, starting from the following observation.

\begin{claim}\label{claim: assign depth}
Given a network $N=(V, E)$, finding a minimum mapped network for $N$ is essentially finding a depth assignment to every node in $V$.
\end{claim}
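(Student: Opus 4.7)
The plan is to argue that the depth assignment on $V$ is the only degree of freedom that matters, because once depths are fixed on $I\cup O\cup G$, the optimal buffer content of each fanout tree $\FOT(n)$ can be computed independently and then summed.

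First I would establish the forward direction: given any mapped network $N^\prime$ with buffers $B$, every node in $V^\prime=V\cup B$ carries a depth because $N^\prime$ is path-balanced, so restricting this assignment to $V$ yields a depth assignment for $N$ in the sense of \Cref{claim: assign depth}. This shows that every candidate solution to the buffer-insertion problem induces a depth assignment on $V$ whose legality is witnessed by $N^\prime$ itself.

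Next I would establish the reverse direction, which is the substantive content. Fix a legal depth assignment on $V$ and observe that, by the modified fanin/fanout definition in \Cref{subsec:terminology}, the buffer sets $\FOT(n)$ for distinct $n\in I\cup G$ are pairwise disjoint and partition $B$. Moreover, the subnetwork induced by $\{n\}\cup\FOT(n)\cup\FO(n)$ is fully determined (up to isomorphism of the internal buffer tree) by the integer $\D(n)$, the multiset $\{\RD(n,n_o):n_o\in\FO(n)\}$, and the parameter $s_b$; nothing outside this local neighbourhood can affect or be affected by how the tree is wired, since every internal buffer has in-degree $1$ and its outgoing edges stay inside $\FOT(n)$. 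Consequently $|B|=\sum_{n\in I\cup G}|\FOT(n)|$, and each summand is a function only of the local data. Minimising $|B|$ therefore decomposes into (a) choosing a legal depth assignment on $V$ and (b) solving, independently for each $n$, a purely local irredundant-tree problem determined by $\D(n)$ and the depths of $\FO(n)$. Because (b) is a self-contained combinatorial subproblem that will be addressed in the remainder of \Cref{sec:buf-count}, step (a) is the only global choice, which is precisely the content of the claim.

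The main obstacle I anticipate is making the disjointness and locality argument fully rigorous: one must verify that the irredundancy conditions imposed globally on $N^\prime$ reduce exactly to the same conditions imposed independently on each $\FOT(n)$, and that the path-balanced and properly-branched constraints likewise decompose (the per-edge constraint $\D(n_1)=\D(n_2)-1$ is satisfied automatically once each local tree is built consistently with the fixed endpoint depths, and the branching bound $s_b$ only ever constrains a buffer together with its siblings within the same $\FOT(n)$). Once this decomposition is justified, the equivalence between ``minimum mapped network'' and ``optimal depth assignment plus local optimal fanout trees'' follows directly.
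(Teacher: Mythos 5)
Your proposal is correct and follows essentially the same route as the paper: your disjointness-and-partition step is exactly \Cref{lemma: minimize sum of fot} (whose covering half, as you note, uses the no-dangling-buffer irredundancy rule), and your locality step --- that each $|\FOT(n)|$ is a function only of $\D(n)$, the relative depths of $\FO(n)$, and $s_b$ --- is what the paper establishes via the deterministic construction of Algorithm~\ref{alg:opt-fot}. The only cosmetic difference is that you spell out the (trivial) forward direction, that any path-balanced mapped network restricts to a depth assignment on $V$, which the paper leaves implicit.
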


To show why \Cref{claim: assign depth} is true, we will first formulate \Cref{lemma: minimize sum of fot} to show that the buffer set in an irredundant mapped network can be decomposed into fanout trees of each gate. Then, we will present Algorithm~\ref{alg:opt-fot} to show how the irredundant fanout tree of a gate $g$ can be constructed given the relative depths of its fanouts. Thus, once a depth assignment is given, the total size of fanout trees is decided, so as the size of the mapped network.

\begin{lemma}\label{lemma: minimize sum of fot}
In any irredundant mapped network with PI set~$I$, gate set~$G$, and buffer set~$B$, 
\begin{equation*}
    B = \bigcup\limits_{g \in G} \FOT(g) \cup \bigcup\limits_{i \in I} \FOT(i).
\end{equation*}
\end{lemma}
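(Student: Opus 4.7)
The plan is to prove the stated set equality by showing inclusion in both directions, exploiting the two structural properties that characterize buffers (in-degree exactly $1$) and irredundancy condition~(1) (no dangling buffer).

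The inclusion $\bigcup_{g\in G}\FOT(g)\cup\bigcup_{i\in I}\FOT(i)\subseteq B$ is essentially by definition: a fanout tree $\FOT(n)$ is declared to be a set of buffers, so every element of the right-hand side is a member of $B$. I would dispatch this in a single sentence.

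For the reverse inclusion $B\subseteq\bigcup_{g\in G}\FOT(g)\cup\bigcup_{i\in I}\FOT(i)$, I would pick an arbitrary buffer $b\in B$ and exhibit a node $n\in G\cup I$ together with a fanout $n_o\in\FO(n)\subseteq G\cup O$ such that $b$ lies on a path from $n$ to $n_o$ that passes only through buffers; then $b\in\FOT(n)$ follows from the modified definition of fanouts in a mapped network. The witness is obtained by tracing both ways in the DAG. Tracing backwards: because each buffer has in-degree $1$, $b$ has a unique predecessor; if this predecessor is another buffer, repeat. The underlying network is finite and acyclic, so this trace terminates at a non-buffer node $n$. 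That node cannot be a PO (POs have out-degree $0$), so $n\in G\cup I$. Tracing forwards: by irredundancy condition~(1), $b$ has at least one successor; if that successor is itself a buffer, repeat. Again, finiteness and acyclicity force termination at a non-buffer node $n_o$, which cannot be a PI (PIs have in-degree $0$), so $n_o\in G\cup O$. Concatenating the two traces gives a path $n\to\cdots\to b\to\cdots\to n_o$ whose interior consists entirely of buffers, which is exactly the situation captured by the mapped-network definitions of $\FI$ and $\FO$, so $n_o\in\FO(n)$ and $b\in\FOT(n)$.

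The main subtlety, and the only place where the assumptions are really used, is the forward trace: without irredundancy condition~(1) a buffer could dangle, and then it would sit in no fanout tree at all, breaking the inclusion. The backward trace and the identification of the endpoints use only the in- and out-degree constraints from the definitions of buffers, PIs, and POs, together with acyclicity. I do not expect any computational difficulty; the entire argument is a short structural case analysis, and the write-up should fit comfortably in a short paragraph once the two traces are set up. As a side remark, the argument actually shows that the union on the right-hand side is disjoint (the backward trace reaches a \emph{unique} $n$), which is more than the lemma claims but may be useful for later results in the paper.
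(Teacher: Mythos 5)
Your proposal is correct and follows essentially the same route as the paper's proof: a backward trace along the unique incoming edges to locate the owning gate or PI, irredundancy condition~(1) to guarantee the buffer actually lies between that node and one of its fanouts, and the trivial reverse inclusion since each $\FOT(\cdot)$ is by definition a subset of $B$. In fact you are slightly more explicit than the paper in spelling out the forward trace where condition~(1) is used, and your closing disjointness remark matches the paper's observation that the non-empty fanout trees partition $B$.
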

\begin{proof}
By definition, a buffer has exactly one incoming edge.
The adjacent node connected to a buffer with its incoming edge is either another buffer in $B$, a gate in $G$, or an PI in $I$ because POs have no outgoing edge.
Going from a buffer $b$ in the opposite direction of edges and continue tracing until a gate $g$ or a PI $i$ is met, we have $b \in \FOT(g)$ (or $b \in \FOT(i)$) because there is no dangling buffer tree (rule~1 for irredundant networks). 
Hence, for each buffer $b \in B$, there is either a gate $g \in G$ such that $b \in \FOT(g)$, or there is a PI $i \in I$ such that $b \in \FOT(i)$. Moreover, this gate or PI is unique for each $b$. For each gate $g \in G$ and for each PI $i \in I$, $\FOT(g) \subseteq B$ and $\FOT(i) \subseteq B$ by definition. Thus, the set of non-empty fanout trees is a partitioning of $B$.
\end{proof}

\begin{algorithm}
\begin{small}
    \DontPrintSemicolon
    \KwIn{A gate $g$}
    \KwOut{The size $|\FOT(g)|$ of the fanout tree of $g$}
    \it
    $l_{max}$ \assign $\max\limits_{g_o \in \FO(g)}\RD(g, g_o)$\;
    count \assign $0$\;
    edges \assign $|\{g_o \in \FO(g) : \RD(g, g_o) = l_{max}\}|$\;
    \For{$l = l_{max} - 1 ~\text{\bf downto} ~1$}{
        buffers \assign $\lceil \frac{\text{edges}}{s_b} \rceil$\;
        count \assign count $+$ buffers\;
        edges \assign buffers $ + ~|\{g_o \in \FO(g) : \RD(g, g_o) = l\}|$\;
    }
    {\bf assert} edges $ = 1$\;
    \KwRet{count}
    \vspace{.2cm}
    \caption{Irredundant fanout tree construction given relative depths of fanouts.}
    \label{alg:opt-fot}
\end{small}
\end{algorithm}

For any gate $g$, given relative depths $\RD(g, g_o)$ of its fanouts $g_o \in \FO(g)$, the size of its fanout tree $|FOT(g)|$ can be computed with Algorithm~\ref{alg:opt-fot}.
The algorithm iterates over all levels $l$ from the relative depth of the highest fanout down to $1$, and counts the number of buffers (variable \textit{buffers}) needed at each level. The total number of buffers is accumulated in variable \textit{count} (line $6$). At each level $l$, variable \textit{edges} keeps the number of edges ending in some node of relative depth $l$, which is simply the number of buffers and fanouts at this level (line $7$). Then, the number of buffers needed at the lower level $l-1$ is computed by the number of edges starting at $l-1$ (i.e., the number of edges ending at $l$) divided by the splitting capacity $s_b$ and rounded up (line $5$). This algorithm works also for constructing the fanout tree of a given PI. If the fanout information is stored in a data structure that the size $|\{g_o \in \FO(g) : \RD(g, g_o) = l\}|$ for any given value $l$ can be queried in constant time, then the algorithm runs in linear time with respect to $|\FO(g)|$.

\Cref{fig:alg1} illustrates an example execution of Algorithm~\ref{alg:opt-fot}, where circles are gates and squares are buffers, and $s_b=2$ is assumed. The concerned gate $g$ has one fanout of relative depth $2$ and three fanouts of relative depth $5$. The total number of buffers in the fanout tree is $5$.

\begin{figure}[t]
\center
\begin{tikzpicture}[>=latex]
  \tikzstyle{square} = [draw,minimum size=0.2cm]
  \tikzstyle{bubble} = [draw,circle,minimum size=0.35cm]
  \matrix[row sep=0.4cm,column sep=0.15cm, nodes={}] {
    \node (layer5) {};   \node[bubble] (7) {}; &                       & \node[bubble] (8) {}; & \node[bubble] (9) {}; & \node (comp5) {}; \\
    \node (layer4) {};                         & \node[square] (5) {}; &                       & \node[square] (6) {}; & \node (comp4) {}; \\
    \node (layer3) {};                         &                       & \node[square] (4) {}; &                       & \node (comp3) {}; \\
    \node (layer2) {};   \node[bubble] (2) {}; &                       & \node[square] (3) {}; &                       & \node (comp2) {}; \\
    \node (layer1) {};                         & \node[square] (1) {}; &                       &                       & \node (comp1) {}; \\
    \node (layer0) {};                         & \node[bubble] (0) {}; &                       &                       & \node (comp0) {}; \\
  };

  \draw[->] (0) -- (1); 
  \draw[->] (1) -- (2); 
  \draw[->] (1) -- (3);
  \draw[->] (3) -- (4); 
  \draw[->] (4) -- (5); 
  \draw[->] (4) -- (6);
  \draw[->] (5) -- (7); 
  \draw[->] (5) -- (8);
  \draw[->] (6) -- (9);

  \node at (0) {\footnotesize $g$};
  \node[left of=layer1] {\footnotesize $l = 1$};
  \node[left of=layer2] {\footnotesize $l = 2$};
  \node[left of=layer3] {\footnotesize $l = 3$};
  \node[left of=layer4] {\footnotesize $l = 4$};
  \node[left of=layer5] {\footnotesize $l = 5$};

  \node[right=-0.4cm of comp1] {\footnotesize \it buffers $ = \lceil\frac{2}{2}\rceil = 1$,};
  \node[right=-0.4cm of comp2] {\footnotesize \it buffers $ = \lceil\frac{1}{2}\rceil = 1$,};
  \node[right=-0.4cm of comp3] {\footnotesize \it buffers $ = \lceil\frac{2}{2}\rceil = 1$,};
  \node[right=-0.4cm of comp4] {\footnotesize \it buffers $ = \lceil\frac{3}{2}\rceil = 2$,};

  \node[right=2.2cm of comp1] {\footnotesize \it edges $ = 1$};
  \node[right=2.2cm of comp2] {\footnotesize \it edges $ = 2$};
  \node[right=2.2cm of comp3] {\footnotesize \it edges $ = 1$};
  \node[right=2.2cm of comp4] {\footnotesize \it edges $ = 2$};
  \node[right=2.2cm of comp5] {\footnotesize \it edges $ = 3$};
\end{tikzpicture}
\caption{Example sub-network to illustrate Algorithm 1.}
\vspace{-0.5em}
\label{fig:alg1}
\end{figure}
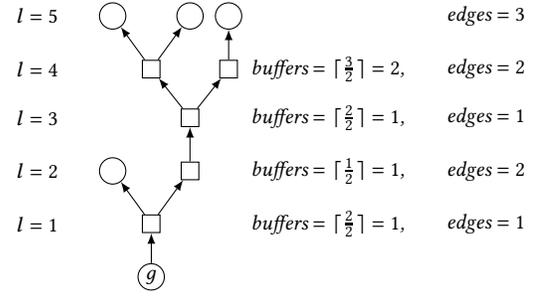

The constructed fanout tree is guaranteed to be irredundant because only the minimum number of buffers is inserted at each level based on the number of outgoing edges needed. Note that the retiming optimization proposed in~\cite{cai2019buffer}, which pushes buffers from the outputs of a splitter to its input, is already considered during construction of irredundant fanout trees.

Moreover, Algorithm~\ref{alg:opt-fot} also verifies whether it is possible to build a properly-branched network with the given depth assignment. In line $8$, the assertion makes sure that the gate $g$ has only one outgoing edge. 
Running the algorithm for all PIs and gates in a depth-assigned network, by \Cref{lemma: minimize sum of fot}, a mapped network is derived. The mapped network is guaranteed to be properly-branched if the assertion in line $8$ never fails. It is also path-balanced because every node is connected to a node at exactly one level lower. As Algorithm~\ref{alg:opt-fot} is deterministic, we conclude that the number of irredundant buffers for a given depth assignment is unique. 

\section{Optimization on Depth Assignment}\label{sec:optimize}
Following \Cref{claim: assign depth}, in this section, we attempt to find a good depth assignment to minimize the total number of buffers in the mapped network. In \Cref{subsec:scheduling}, we first obtain an initial depth assignment using scheduling algorithms. Then, in \Cref{subsec:chunk}, we try to move gates up or down to reduce the total number of buffers. \emph{Moving} a gate $g$ \emph{up} by $l$ levels means that $\D(g)$ is increased by $l$ while the depths of the other gates remain the same. Similarly, moving $g$ \emph{down} means $\D(g)$ is decreased. During the entire process, we always ensure that the network is legal. 

\subsection{Obtaining an Initial Depth Assignment}\label{subsec:scheduling}
An initial depth assignment can be obtained using an \emph{as-soon-as-possible scheduling}~(ASAP) algorithm which assigns the smallest possible depth to each gate. To ensure that the network can be path-balanced and properly-branched after mapping, enough depths for a balanced fanout tree are reserved at the output of every multi-fanout gates, which is calculated as
\begin{equation}
    \left\lceil \frac{\log (|\FO(g)|)}{\log (s_b)} \right\rceil.
\end{equation}
Then, an \emph{as-late-as-possible scheduling}~(ALAP) can be applied using an upper bound $\D(N)$ obtained by ASAP.

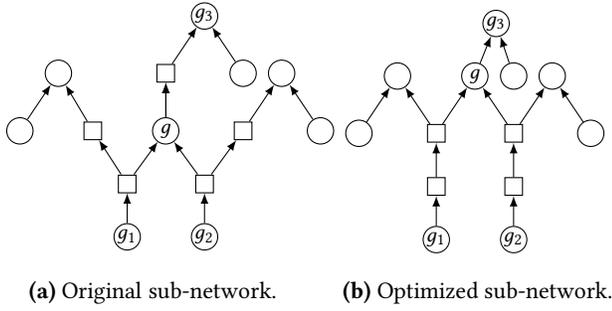
\begin{figure}[t]
\center
\begin{subfigure}[b]{0.49\columnwidth}
    \centering
    \begin{tikzpicture}[>=latex, baseline=(13)]
  \tikzstyle{s} = [draw,minimum size=0.2cm]
  \tikzstyle{b} = [draw,circle,minimum size=0.35cm]
  \matrix[row sep=0.4cm,column sep=0.15cm, nodes={}] {
                     &                  &                  &                  &                  & \node[b](13) {}; &                  &                  &                  \\
                     & \node[b] (9) {}; &                  &                  & \node[s](10) {}; &                  & \node[b](11) {}; & \node[b](12) {}; &                  \\
    \node[b] (4) {}; &                  & \node[s] (5) {}; &                  & \node[b] (6) {}; &                  & \node[s] (7) {}; &                  & \node[b] (8) {}; \\
                     &                  &                  & \node[s] (2) {}; &                  & \node[s] (3) {}; &                  &                  &                  \\
                     &                  &                  & \node[b] (0) {}; &                  & \node[b] (1) {}; &                  &                  &                  \\
  };

  \draw[->] (0) -- (2); 
  \draw[->] (1) -- (3);
  \draw[->] (2) -- (5); 
  \draw[->] (2) -- (6);
  \draw[->] (3) -- (6);
  \draw[->] (3) -- (7);
  \draw[->] (4) -- (9); 
  \draw[->] (5) -- (9);
  \draw[->] (6) -- (10);
  \draw[->] (6) -- (10);
  \draw[->] (7) -- (12);
  \draw[->] (8) -- (12);
  \draw[->] (10) -- (13); 
  \draw[->] (11) -- (13);
 
  \node at (0)  {\footnotesize $g_1$};
  \node at (1)  {\footnotesize $g_2$};
  \node at (6)  {\footnotesize $g$};
  \node at (13) {\footnotesize $g_3$};
\end{tikzpicture}
    \caption{Original sub-network.}\label{fig:ex1}
  \end{subfigure}
  \hfill
  \begin{subfigure}[b]{0.49\columnwidth}
    \centering
    \begin{tikzpicture}[>=latex, baseline=(12shift)]
  \tikzstyle{s} = [draw,minimum size=0.2cm]
  \tikzstyle{b} = [draw,circle,minimum size=0.35cm]
  \matrix[row sep=0.4cm,column sep=0.15cm, nodes={}] {
                     &                  &                  & \node   (12) {}; &                  &                  &                  \\
                     & \node[b] (8) {}; &                  & \node[b] (9) {}; & \node[b](11) {}; & \node[b](10) {}; &                  \\
    \node[b] (4) {}; &                  & \node[s] (5) {}; &                  & \node[s] (6) {}; &                  & \node[b] (7) {}; \\
                     &                  & \node[s] (2) {}; &                  & \node[s] (3) {}; &                  &                  \\
                     &                  & \node[b] (0) {}; &                  & \node[b] (1) {}; &                  &                  \\
  };

  \node[b] (12shift) at ([xshift=8pt] 12) {};
  
  \draw[->] (0) -- (2); 
  \draw[->] (1) -- (3);
  \draw[->] (2) -- (5); 
  \draw[->] (3) -- (6);
  \draw[->] (4) -- (8); 
  \draw[->] (5) -- (8);
  \draw[->] (5) -- (9);
  \draw[->] (6) -- (9);
  \draw[->] (6) -- (10);
  \draw[->] (7) -- (10);
  \draw[->] (9)  -- (12shift); 
  \draw[->] (11) -- (12shift);
 
  \node at (0)  {\footnotesize $g_1$};
  \node at (1)  {\footnotesize $g_2$};
  \node at (9)  {\footnotesize $g$};
  \node at (12shift) {\footnotesize $g_3$};
\end{tikzpicture}
    \caption{Optimized sub-network.}\label{fig:ex2}
  \end{subfigure}
\caption{Example sub-network where ASAP does not lead to the optimum.}
\vspace{-0.5em}
\label{fig:ex2}
\end{figure}

However, neither ASAP nor ALAP achieves the global optimum. \Cref{fig:ex2}~(a) shows an example sub-network after ASAP, where circles are gates and squares are buffers. The gate $g$ is not the highest fanout of either of its fanins, thus moving $g$ up does not increase sizes of the fanout trees of $g_1$ and $g_2$. Moreover, the fanout $g_3$ is lower-bounded by its other fanin. Thus, by moving up $g$, as shown in \Cref{fig:ex2}~(b), a buffer is eliminated in its fanout tree.

The reason why this problem is not trivial is because a movement of a gate affects both its own fanout tree and its fanins' fanout trees. Moreover, in some cases, it is impossible to legally move a single gate and reduce the buffer count, but rearranging some gates altogether eventually leads to further reduction. Thus, in the following sections, groups of gates are identified and moved together as \emph{chunks}.

\subsection{Chunked Movement}\label{subsec:chunk}
A movement is \emph{legal} if the network remains legal after the movement. For example, if a gate $g$ has a fanout $g_o$ of relative depth $\RD(g, g_o) = 1$, then moving $g$ up alone is not legal. Similarly, if a gate $g$ has more than one fanouts, then moving any of its fanouts to $\D(g) + 1$ is not legal because there must be a buffer occupying the only outgoing edge of $g$ at $\D(g) + 1$. 

A pair of gates $(g, g_o): g_o \in \FO(g)$ are \emph{close} if either one of the following conditions holds:
\begin{enumerate}
    \item $\RD(g, g_o) = 1$, implying that $g_o$ is the only fanout of $g$.
    \item $|\FO(g)| > 1$ and $\RD(g, g_o) = 2$.
\end{enumerate}
If a gate $g$ and its fanout $g_o$ are not close, then there is \emph{flexibility} at the output of $g$ and at the input of $g_o$.

A \emph{chunk} $C$ is a set of closely-connected gates and can be seen as a super-node having multiple incoming and outgoing edges, called the \emph{input interfaces}~(IIs) and \emph{output interfaces}~(OIs), respectively. An interface is a pair $(g_c, g_f)$ of gates, where $g_c \in C$, $g_f \notin C$, and either $g_f \in \FI(g_c)$ (II) or $g_f \in \FO(g_c)$ (OI).

\begin{algorithm}
\begin{small}
    \DontPrintSemicolon
    \SetKw{cont}{continue}
    \SetKw{pop}{pop}
    \KwIn{An initial gate $g_0$}
    \KwOut{A chunk $C$ and its interfaces $T$}
    $C$ \assign $\{g_0\}$\;
    $F$ \assign $\{(g_0, g) : g \in \FI(g_0) \cup \FO(g_0)\}$\;
    $T$ \assign $\emptyset$\;
    \While{$F \neq \emptyset$}{
        $(g_c, g_f)$ \assign pop({$F$})\;
        \lIf{$g_f \in C$}{\cont}
        \If{$g_c$ and $g_f$ are close}{
            $C$ \assign $C \cup~g_f$\;
            $F$ \assign $F \cup~ \{(g_f, g) : g \in \FI(g_f) \cup \FO(g_f)\}$\;
        }
        \Else{
            $T$ \assign $T \cup \{(g_c, g_f)\}$\;
        }
    }
    \KwRet{$C, T$}
    \vspace{.2cm}
    \caption{Chunk construction.}
    \vspace{-0.5em}
    \label{alg:chunk}
\end{small}
\end{algorithm}

Algorithm~\ref{alg:chunk} illustrates how a chunk can be constructed. Starting from an initial gate $g_0$, a chunk is formed by exploring towards its fanins and fanouts and adding gates into the chunk if they are close (line $8$), or recording an input or output interface if there is flexibility (line $11$). When a new gate is added into the chunk, its fanins and fanouts are also explored (line $9$).

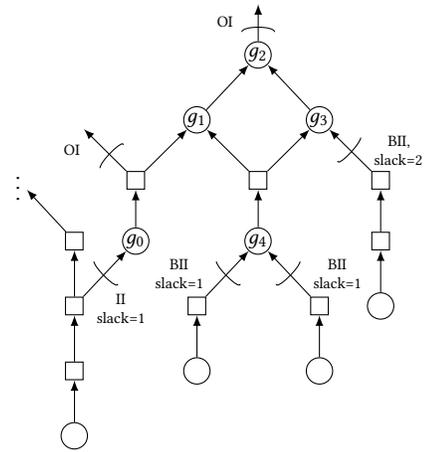
\begin{figure}[t]
\center
\begin{tikzpicture}[>=latex]
  \tikzstyle{s} = [draw,minimum size=0.2cm]
  \tikzstyle{b} = [draw,circle,minimum size=0.35cm]
  \matrix[row sep=0.5cm,column sep=0.45cm, nodes={}] {
                             &                  &                   &                   & \node[]  (e2) {}; &                   &                   \\
                             &                  &                   &                   & \node[b] (17) {}; &                   &                   \\
                             & \node[] (e1) {}; &                   & \node[b] (15) {}; &                   & \node[b] (16) {}; &                   \\ 
    \node[]  (e0){}; &                  & \node[s] (12) {}; &                   & \node[s] (13) {}; &                   & \node[s] (14) {}; \\        
                             & \node[s] (8) {}; & \node[b] (9)  {}; &                   & \node[b] (10) {}; &                   & \node[s] (11) {}; \\    
                             & \node[s] (4) {}; &                   & \node[s] (5)  {}; &                   & \node[s] (6) {};  & \node[b] (7)  {}; \\
                             & \node[s] (1) {}; &                   & \node[b] (2)  {}; &                   & \node[b] (3) {};  &                   \\
                             & \node[b] (0) {}; &                   &                   &                   &                   &                   \\
  };

  \draw[->] (0) -- (1); 

  \draw[->] (1) -- (4); 
  \draw[->] (2) -- (5); 
  \draw[->] (3) -- (6);

  \draw[->] (4) -- (8); 
  \draw[->] (4) edge
    node [below=4pt,xshift=6pt] {\begin{minipage}{0.75cm}\centering{}\tiny{}II\\slack=1\end{minipage}}
    node [sloped,anchor=south,auto=false,rotate=-90,yshift=-3pt,minimum width=12pt] (s0) {} (9);
  \draw[->] (5) edge
    node [left=4pt] {\begin{minipage}{0.75cm}\centering{}\tiny{}BII\\slack=1\end{minipage}}
    node [sloped,anchor=south,auto=false,rotate=-90,yshift=-3pt,minimum width=12pt] (s1) {} (10);
  \draw[->] (6) edge
    node [right=4pt] {\begin{minipage}{0.75cm}\centering{}\tiny{}BII\\slack=1\end{minipage}}
    node [sloped,anchor=south,auto=false,rotate=90,yshift=-3pt,minimum width=12pt] (s2) {} (10);
  \draw[->] (7) -- (11);

  \draw[->] (9)  -- (12); 
  \draw[->] (10) -- (13);
  \draw[->] (11) -- (14);
  \draw[->] (8)  -- (e0);

  \draw[->] (12) -- (15); 
  \draw[->] (13) -- (15);
  \draw[->] (13) -- (16);
  \draw[->] (14) edge
    node [right=4pt] {\begin{minipage}{0.75cm}\centering{}\tiny{}BII, slack=2\end{minipage}}
    node [sloped,anchor=south,auto=false,rotate=90,yshift=-3pt,minimum width=12pt] (s3) {} (16);
  \draw[->] (12) edge
    node [left=4pt] {\begin{minipage}{0.35cm}\centering{}\tiny{}OI\end{minipage}}
    node [sloped,anchor=south,auto=false,rotate=90,yshift=-7pt,minimum width=12pt] (s4) {} (e1);

  \draw[->] (15) -- (17); 
  \draw[->] (16) -- (17);

  \draw[->] (17) edge
    node [left=4pt] {\begin{minipage}{0.35cm}\centering{}\tiny{}OI\end{minipage}}
    node [sloped,anchor=south,auto=false,rotate=90,minimum width=12pt] (s5) {} (e2); 

  \node at (9)  {\footnotesize $g_0$};
  \node at (10) {\footnotesize $g_4$};
  \node at (15) {\footnotesize $g_1$};
  \node at (16) {\footnotesize $g_3$};
  \node at (17) {\footnotesize $g_2$};
  \node at (e0) {\footnotesize $\vdots$};

  \draw (s0.west) to [in = 265,out = 275,looseness = 0.5] (s0.east); 
  \draw (s1.west) to [in = 265,out = 275,looseness = 0.5] (s1.east); 
  \draw (s2.west) to [in = 265,out = 275,looseness = 0.5] (s2.east); 
  \draw (s3.west) to [in = 265,out = 275,looseness = 0.5] (s3.east); 

  \draw (s4.west) to [in = 85, out = 95, looseness = 0.5] (s4.east); 
  \draw (s5.west) to [in = 85, out = 95, looseness = 0.5] (s5.east); 
\end{tikzpicture}
\caption{A chunk to be moved down.}
\vspace{-0.5em}
\label{fig:down}
\end{figure}

A chunk constructed with Algorithm~\ref{alg:chunk} has flexibilities at all of its interfaces. Thus, even though the individual gates in the chunk cannot be moved legally, a chunk may be moved as a whole. \Cref{fig:down} shows an example chunk. Starting from the initial gate $g_0$, closely-connected gates $g_1, g_2, g_3, g_4$ are added into the chunk in the respective order. The gate $g_1$, for example, cannot be moved up nor down without moving other gates at the same time. In contrast, the gate $g_0$ can be legally moved down, but moving it alone only increases the total number of buffers.

To see how many levels a chunk can be moved and whether the movement reduces the total number of buffers, we define some more properties for chunk interfaces. 

\textit{Moving down:} When a chunk is intended to be moved down, a \emph{slack} is computed at each input interface $(g_c, g_f)$ by
\begin{equation}
    \text{slack}(g_c, g_f) = 
    \begin{cases} 
        & \RD(g_f, g_c) - 1 \text{, if~} |\FO(g_f)| = 1 \\
        & \RD(g_f, g_c) - 2 \text{, otherwise}
    \end{cases}
\end{equation}
The slack of the chunk is the maximum number of levels by which we can move the chunk down, and it is calculated as the minimum slack of all of its input interfaces.
Moreover, $(g_c, g_f)$ is said to be a \emph{beneficial input interface}~(BII) if
\begin{equation}
    \forall g_o \in \FO(g_f), g_o \neq g_c : \RD(g_f, g_o) < \RD(g_f, g_c).
\end{equation}
If a chunk has $x$ BIIs and $y$ OIs with distinct $g_c$, moving the chunk down by $l$ levels eliminates $l \cdot (x - y)$ buffers in total.

\textit{Moving up:} Similarly and conversely, when a chunk is intended to be moved up, a \emph{slack} is computed at each output interface $(g_c, g_f)$ by
\begin{equation}
    \text{slack}(g_c, g_f) = 
    \begin{cases} 
        & \RD(g_c, g_f) - 1 \text{, if~} |\FO(g_c)| = 1 \\
        & \RD(g_c, g_f) - 2 \text{, otherwise}
    \end{cases}
\end{equation}
The slack of the chunk is the minimum slack of all of its output interfaces. When moving up, output interfaces are always beneficial.
If a chunk has $x$ OIs with distinct $g_c$ and $y$ IIs, moving the chunk up by $l$ levels eliminates $l \cdot (x - y)$ buffers in total.

\section{Experimental Results}\label{sec:exp}

\begin{table*}
\centering
\caption{Impact of PI and/or PO balancing and quality of chunked movement.}
\label{tbl:balancing}
\begin{tabular}{lrr rrrrr rrrrr}
\toprule
\multicolumn{3}{c}{\textbf{Unmapped}} &
\multicolumn{10}{c}{\textbf{Balance PIs}}\\

\cmidrule(lr){1-3}
\cmidrule(lr){4-13} 
\multicolumn{3}{c}{} &
\multicolumn{5}{c}{\textbf{Balance POs}} &
\multicolumn{5}{c}{\textbf{Not balance POs}}\\

\cmidrule(lr){4-8}
\cmidrule(lr){9-13}

Bench. & \#gates & Depth &
ASAP & ALAP & Opt. & Depth & \#chunks & 
ASAP & ALAP & Opt. & Depth & \#chunks \\ 
\midrule
 c1908   &   381 &  38  &   3011 &  3296 &  2820 &  64 & 56  &   2605 &  3296 &  2413  & 64 &  56   \\
 c432    &   174 &  44  &   2471 &  2647 &  2220 &  68 & 25  &   2423 &  2635 &  2198  & 70 &  30   \\
 c5315   &  1270 &  33  &   9936 & 11844 &  9457 &  60 & 200 &   6409 & 11402 &  5986  & 59 & 205   \\
 c880    &   300 &  28  &   2577 &  2911 &  2159 &  42 & 44  &   1854 &  2884 &  1501  & 42 &  45   \\
 chkn    &   421 &  28  &   1607 &  1280 &  1241 &  38 & 8   &   1536 &  1280 &  1232  & 38 &   8   \\
 count   &   119 &  18  &    816 &  1004 &   766 &  29 & 31  &    639 &  1004 &   585  & 29 &  31   \\
 dist    &   535 &  16  &   1086 &   814 &   809 &  28 & 3   &   1066 &   814 &   808  & 28 &   3   \\
 in5     &   443 &  19  &   1413 &  1056 &  1042 &  30 & 18  &   1278 &  1056 &  1020  & 30 &  18   \\
 in6     &   370 &  17  &   1184 &   938 &   884 &  23 & 22  &   1002 &   938 &   811  & 23 &  22   \\
 k2      &  1955 &  25  &   5177 &  4570 &  4171 &  43 & 53  &   4512 &  4528 &  3722  & 43 & 139   \\
 m3      &   411 &  13  &    833 &   636 &   620 &  22 & 14  &    761 &   634 &   615  & 22 &  14   \\
 max512  &   713 &  17  &   1399 &  1093 &  1078 &  28 & 3   &   1361 &  1093 &  1070  & 28 &   3   \\
 misex3  &  1532 &  24  &   4181 &  3004 &  2879 &  38 & 16  &   4113 &  3004 &  2883  & 38 &  15   \\
 mlp4    &   462 &  16  &    915 &   668 &   653 &  26 & 9   &    839 &   668 &   647  & 26 &   9   \\
 prom2   &  3477 &  22  &   6855 &  5442 &  5300 &  33 & 59  &   6777 &  5442 &  5298  & 33 &  59   \\
 sqr6    &   138 &  13  &    381 &   246 &   246 &  20 & 8   &    287 &   241 &   229  & 20 &   6   \\
 x1dn    &   152 &  14  &    479 &   561 &   428 &  22 & 16  &    453 &   561 &   399  & 22 &  16   \\
\midrule  
Total    &       &      &  44321 & 42010 & 36773 &     &      &  37915 & 41480 & 31417  &    & \\
Improv.  &       &      &        & 5.2\% & 17.0\%&     &      &        & -9.4\%& 17.1\% &    & \\
Ratio    & & & \bf(1.00) & \it(1.00) &\underline{(1.00)}& & & \bf0.86 & \it 0.99&\underline{0.85}& & \\
\bottomrule
\multicolumn{13}{c}{} \\

%
\toprule
%
%
%

\multicolumn{3}{c}{} &
\multicolumn{10}{c}{\textbf{Not balance PIs}}\\

\cmidrule(lr){4-13} 
\multicolumn{3}{c}{} &
\multicolumn{5}{c}{\textbf{Balance POs}} &
\multicolumn{5}{c}{\textbf{Not balance POs}}\\

\cmidrule(lr){4-8}
\cmidrule(lr){9-13}

Bench. & \#PIs & \#POs &
ASAP & ALAP & Opt. & Depth & \#chunks & 
ASAP & ALAP & Opt. & Depth & \#chunks \\ 
\midrule
 c1908  &  33 &  25 &  3011 &  2910 &  2549 &  62 &  24  &   2605 &  2910 &  2202  & 62 & 44  \\
 c432   &  36 &   7 &  2471 &  1903 &  1689 &  65 &   6  &   2423 &  1891 &  1673  & 65 &  6  \\
 c5315  & 178 & 123 &  9936 &  4520 &  3934 &  56 &  64  &   6409 &  4197 &  3574  & 56 & 64  \\
 c880   &  60 &  26 &  2577 &  1475 &  1306 &  40 &  22  &   1854 &  1448 &  1238  & 40 & 22  \\
 chkn   &  29 &   7 &  1607 &   785 &   720 &  34 &   8  &   1536 &   785 &   715  & 34 &  8  \\
 count  &  35 &  16 &   816 &   343 &   287 &  24 &  15  &    639 &   343 &   286  & 24 & 15  \\
 dist   &   8 &   5 &  1086 &   791 &   762 &  24 &   2  &   1066 &   791 &   761  & 24 &  2  \\
 in5    &  24 &  14 &  1413 &   814 &   762 &  27 &  14  &   1278 &   814 &   746  & 27 & 13  \\
 in6    &  33 &  23 &  1184 &   674 &   627 &  23 &  21  &   1002 &   674 &   621  & 23 & 19  \\
 k2     &  45 &  45 &  5177 &  3854 &  3375 &  37 &  59  &   4512 &  3812 &  3249  & 37 & 56  \\
 m3     &   8 &  16 &   833 &   613 &   576 &  19 &  13  &    761 &   611 &   567  & 19 & 12  \\
 max512 &   9 &   6 &  1399 &  1081 &  1036 &  26 &   3  &   1361 &  1081 &  1028  & 26 &  3  \\
 misex3 &  14 &  14 &  4181 &  2983 &  2815 &  34 &  19  &   4113 &  2983 &  2811  & 34 & 19  \\
 mlp4   &   8 &   8 &   915 &   645 &   609 &  23 &   8  &    839 &   645 &   603  & 23 &  8  \\
 prom2  &   9 &  21 &  6855 &  5435 &  5261 &  33 &  57  &   6777 &  5435 &  5259  & 33 & 57  \\
 sqr6   &   6 &  12 &   381 &   230 &   217 &  17 &   8  &    287 &   225 &   200  & 17 &  6  \\
 x1dn   &  27 &   6 &   479 &   399 &   362 &  19 &   5  &    453 &   399 &   362  & 19 &  5  \\
\midrule  
Total   &     &     & 44321 & 29455 & 26887 &     &      &  37915 & 29044 & 25895  &    & \\
Improv. &     &     &       &33.5\% & 39.3\%&     &      &        & 23.4\%& 31.7\% &    & \\
Ratio   &     &    &\bf1.00 &\it0.70&\underline{0.73}& & &\bf0.86 &\it0.69&\underline{0.70}&  & \\
\bottomrule
\end{tabular}
\end{table*}

\begin{table*}
\centering
\caption{Impact of PI branching and splitting capacity.}
\label{tbl:branchPI}
\setlength{\tabcolsep}{4pt}
\begin{tabular}{lrr rr rr rr rr rr rr}
\toprule
 & & &
\multicolumn{6}{c}{\textbf{Branch PIs}} &
\multicolumn{6}{c}{\textbf{Not branch PIs}}\\

\cmidrule(lr){4-9} 
\cmidrule(lr){10-15} 
&
\multicolumn{2}{c}{$|\FO(i)|$} &
\multicolumn{2}{c}{$s_b = 2$} &
\multicolumn{2}{c}{$s_b = 3$} &
\multicolumn{2}{c}{$s_b = 4$} &
\multicolumn{2}{c}{$s_b = 2$} &
\multicolumn{2}{c}{$s_b = 3$} &
\multicolumn{2}{c}{$s_b = 4$}\\

\cmidrule(lr){2-3}
\cmidrule(lr){4-5}
\cmidrule(lr){6-7}
\cmidrule(lr){8-9}
\cmidrule(lr){10-11}
\cmidrule(lr){12-13}
\cmidrule(lr){14-15}

Bench. & Max. & Avg. &
Opt. & Depth & 
Opt. & Depth & 
Opt. & Depth & 
Opt. & Depth & 
Opt. & Depth & 
Opt. & Depth \\ 
\midrule
 c1908   &   22 &   0.67 &   2624 & 69 &  2202  & 62  &  2073 & 58 &    1245 &  66 &   952 &  61 &   870 & 59  \\
 c432    &    6 &   0.17 &   1932 & 75 &  1673  & 65  &  1512 & 58 &     497 &  74 &   456 &  67 &   423 & 57  \\
 c5315   &   84 &   0.47 &   4818 & 65 &  3574  & 56  &  3087 & 51 &    2668 &  60 &  2043 &  51 &  1796 & 50  \\
 c880    &   11 &   0.18 &   1568 & 50 &  1238  & 40  &  1192 & 40 &     528 &  45 &   437 &  42 &   428 & 42  \\
 chkn    &   42 &   1.45 &   1001 & 40 &   715  & 34  &   602 & 34 &     264 &  34 &   235 &  33 &   229 & 33  \\
 count   &   32 &   0.91 &    373 & 26 &   286  & 24  &   273 & 24 &      72 &  25 &    57 &  23 &    57 & 23  \\
 dist    &   96 &  12.00 &   1093 & 27 &   761  & 24  &   659 & 23 &     392 &  22 &   376 &  22 &   376 & 22  \\
 in5     &   52 &   2.17 &   1019 & 29 &   746  & 27  &   670 & 27 &     351 &  26 &   307 &  26 &   306 & 26  \\
 in6     &   46 &   1.39 &    835 & 25 &   621  & 23  &   544 & 21 &     216 &  20 &   206 &  20 &   205 & 20  \\
 k2      &  152 &   3.38 &   4554 & 39 &  3249  & 37  &  2915 & 36 &    3019 &  38 &  2349 &  35 &  2207 & 35  \\
 m3      &   77 &   9.62 &    861 & 25 &   567  & 19  &   481 & 19 &     306 &  18 &   278 &  18 &   267 & 18  \\
 max512  &  126 &  14.00 &   1475 & 30 &  1028  & 26  &   894 & 24 &     563 &  23 &   541 &  23 &   539 & 23  \\
 misex3  &  144 &  10.29 &   3769 & 43 &  2811  & 34  &  2558 & 34 &    2029 &  33 &  1864 &  33 &  1841 & 34  \\
 mlp4    &   79 &   9.88 &    888 & 25 &   603  & 23  &   514 & 23 &     281 &  22 &   263 &  22 &   263 & 22  \\
 prom2   &  451 &  50.11 &   7369 & 37 &  5259  & 33  &  4568 & 31 &    3813 &  26 &  3405 &  26 &  3346 & 26  \\
 sqr6    &   33 &   5.50 &    292 & 17 &   200  & 17  &   179 & 17 &      92 &  16 &    90 &  16 &    90 & 16  \\
 x1dn    &   15 &   0.56 &    415 & 21 &   362  & 19  &   331 & 19 &     139 &  20 &   124 &  18 &   123 & 18  \\
\midrule
Total    &      &        &  34886 &    &  25895 &     & 23052 &    &   16475 &     & 13983 &     &  13366 &   \\
Ratio    &      &        &   1.35 &    & (1.00) &     &  0.89 &    &    0.64 &     &  0.54 &     &   0.52 &   \\
\bottomrule
\end{tabular}
\end{table*}

In this section, we present experimental results using different combinations of technology assumptions discussed in \Cref{sec:assumptions}.
The irredundant buffer insertion and chunked movement algorithms are implemented in C++-17 as part of the EPFL logic synthesis library \textit{mockturtle}\footnote{Available: \url{github.com/lsils/mockturtle}}~\cite{Soeken18libs}. As discussed in \Cref{subsec:aqfp}, the intrinsic logic gate in the AQFP technology is the majority-$3$ gate, \emph{majority-inverter graphs} (MIGs) \cite{Amaru15} are used as the data structure for (unmapped) networks in our experiments. We use the same initial MIGs as in~\cite{testa2021algebraic} from the MCNC benchmark suite~\cite{mcnc}.

\subsection{Balancing of PIs and POs}
In this section, we use the assumptions that PIs need to be branched ($s_i = 1$) and $s_b = 3$.
\Cref{tbl:balancing} shows the four possible combinations of the assumptions on whether PIs and POs need to be path-balanced. When the POs are not balanced, we do not impose the requirement of modulo-$4$ path lengths either. The block \textbf{Unmapped} lists the benchmark names (Bench.), numbers of majority gates (\#gates), network depths before buffer insertion (Depth), and numbers of PIs (\#PIs) and POs (\#POs). There are five columns in each block, listing the numbers of irredundant buffers after the initial ASAP scheduling (ASAP) and after ALAP (ALAP), the final number of buffers after optimization with chunked movement (Opt.), the depth of the mapped networks (Depth), and the number of chunks (\#chunks). The row Improv. lists the percentage improvements of ALAP and Opt. comparing to ASAP.  The row Ratio lists the ratios of the initial ASAP (\textbf{bold}), ALAP (\textit{italic}), and optimized (\underline{underlined}) buffer counts across different experiments using \textbf{Balance PIs + Balance POs} as the baseline.

The scheduling method (ASAP or ALAP) that leads to fewer buffers is used to obtain the initial depth assignment for chunked movement, decided independently for each benchmark and for each experiment. When PIs need to be balanced (the upper half of \Cref{tbl:balancing}), ALAP do not lead to much improvement, but chunked movement is able to optimize away $17\%$ of the buffers. On the other hand, when PIs do not need to be balanced (the lower half of \Cref{tbl:balancing}), ALAP usually leads to much better results, reducing about $20$-$30\%$ of buffers, and chunked movement further eliminates another $6$-$8\%$. 
Observing the bold ratios, we see that the path-balancing buffers for POs constitute about $14\%$ of the total when ASAP is applied; observing the italic ratios, we see that the path-balancing buffers for PIs constitute about $30\%$ of the total when ALAP is applied. With the chunked-movement-based optimization, the two extremes are balanced, and we see a larger impact of the PI-balancing assumption.

\subsection{Branching of PIs and Splitting Capacity}
In this section, we use the assumptions that neither PIs nor POs need to be balanced (i.e., the last case in \Cref{tbl:balancing}), and we study the impact of branching PIs and the value of buffer's splitting capacity $s_b$. \Cref{tbl:branchPI} shows the number of buffers (Opt.) and the circuit depth (Depth) after optimization using different assumptions on PI branching and $s_b$ value. The two columns under $|\FO(i)|$ show, respectively, the maximum (Max.) and the average (Avg.) fanout size of PIs in each benchmark. Row Ratio lists the ratios of the buffer counts in each experiment comparing to \textbf{Branch PIs}, $s_b = 3$. 

If PIs do not need to be branched, the number of buffers needed is halved. In other words, about half of the buffers are used to branch high-fanout PIs. This phenomenon is even more obvious when the splitting capacity is smaller. Indeed, PIs with high fanout counts are common in many benchmarks, and PI-branching splitters can hardly be eliminated with any optimization.

When not branching PIs, the impact of splitting capacity is relatively minor, with less than $5\%$ difference between $s_b = 3$ and $s_b = 4$. Thus, except for branching PIs, design of high-capacity splitters is not particularly necessary.

\section{Conclusion and Future Work}\label{sec:conclusion}

In conclusion, this paper provides a different viewpoint to the problem of AQFP buffer and splitter insertion. With the linear-time irredundant buffer insertion algorithm presented in \Cref{sec:buf-count} and simple scheduling algorithms discussed in \Cref{subsec:scheduling}, a good starting point can be obtained efficiently. Then, the chunked movement method illustrated in \Cref{subsec:chunk} provides possibility to further minimize the cost and escape from local minima. In \Cref{sec:exp}, experimental results show that the proposed optimization flow with scheduling and chunked movement reduces about $17$-$39\%$ of buffers, depending on the technology assumptions imposed. Moreover, experiments on different assumptions suggest that PI balancing has a greater impact than PO balancing, and that PI branching accounts for half of the inserted buffers. These results motivate future research on the design of AQFP registers and splitters.
For future work, we hope to develop an exact algorithm to find the global optimal, which will allow us to evaluate how good the existing and future-developed heuristics are. We also plan to integrate the proposed buffer optimization with logic synthesis algorithms considering AQFP constraints such as~\cite{testa2021algebraic,marakkalage2021}.

\begin{acks}
This work was supported in part by the EPFL Open Science Fund and by the SNF grant ``Supercool: Design methods and tools for superconducting electronics'', 200021\_1920981. The authors thank Prof. Nobuyuki Yoshikawa and Prof. Christopher L. Ayala from Yokohama National University for their discussion on the AQFP technology.
\end{acks}

\bibliographystyle{ACM-Reference-Format}
\bibliography{iwls}


\begin{thebibliography}{14}


\ifx \showCODEN    \undefined \def \showCODEN     #1{\unskip}     \fi
\ifx \showDOI      \undefined \def \showDOI       #1{#1}\fi
\ifx \showISBNx    \undefined \def \showISBNx     #1{\unskip}     \fi
\ifx \showISBNxiii \undefined \def \showISBNxiii  #1{\unskip}     \fi
\ifx \showISSN     \undefined \def \showISSN      #1{\unskip}     \fi
\ifx \showLCCN     \undefined \def \showLCCN      #1{\unskip}     \fi
\ifx \shownote     \undefined \def \shownote      #1{#1}          \fi
\ifx \showarticletitle \undefined \def \showarticletitle #1{#1}   \fi
\ifx \showURL      \undefined \def \showURL       {\relax}        \fi
\providecommand\bibfield[2]{#2}
\providecommand\bibinfo[2]{#2}
\providecommand\natexlab[1]{#1}
\providecommand\showeprint[2][]{arXiv:#2}

\bibitem[\protect\citeauthoryear{Amaru, Gaillardon, and De~Micheli}{Amaru
  et~al\mbox{.}}{2015}]%
        {Amaru15}
\bibfield{author}{\bibinfo{person}{Luca Amaru},
  \bibinfo{person}{Pierre-Emmanuel Gaillardon}, {and} \bibinfo{person}{Giovanni
  De~Micheli}.} \bibinfo{year}{2015}\natexlab{}.
\newblock \showarticletitle{Majority-inverter graph: {A} new paradigm for logic
  optimization}.
\newblock \bibinfo{journal}{\emph{IEEE Transactions on CAD}}
  \bibinfo{volume}{35}, \bibinfo{number}{5} (\bibinfo{year}{2015}),
  \bibinfo{pages}{806--819}.
\newblock


\bibitem[\protect\citeauthoryear{Ayala, Saito, Tanaka, Chen, Takeuchi, He, and
  Yoshikawa}{Ayala et~al\mbox{.}}{2020}]%
        {ayala2020semi}
\bibfield{author}{\bibinfo{person}{Christopher~L Ayala}, \bibinfo{person}{Ro
  Saito}, \bibinfo{person}{Tomoyuki Tanaka}, \bibinfo{person}{Olivia Chen},
  \bibinfo{person}{Naoki Takeuchi}, \bibinfo{person}{Yuxing He}, {and}
  \bibinfo{person}{Nobuyuki Yoshikawa}.} \bibinfo{year}{2020}\natexlab{}.
\newblock \showarticletitle{A semi-custom design methodology and environment
  for implementing superconductor adiabatic quantum-flux-parametron
  microprocessors}.
\newblock \bibinfo{journal}{\emph{Superconductor Science and Technology}}
  \bibinfo{volume}{33}, \bibinfo{number}{5} (\bibinfo{year}{2020}),
  \bibinfo{pages}{054006}.
\newblock


\bibitem[\protect\citeauthoryear{Cai, Chen, Ren, Liu, Yoshikawa, and Wang}{Cai
  et~al\mbox{.}}{2019}]%
        {cai2019buffer}
\bibfield{author}{\bibinfo{person}{Ruizhe Cai}, \bibinfo{person}{Olivia Chen},
  \bibinfo{person}{Ao Ren}, \bibinfo{person}{Ning Liu},
  \bibinfo{person}{Nobuyuki Yoshikawa}, {and} \bibinfo{person}{Yanzhi Wang}.}
  \bibinfo{year}{2019}\natexlab{}.
\newblock \showarticletitle{A Buffer and Splitter Insertion Framework for
  Adiabatic Quantum-Flux-Parametron Superconducting Circuits}. In
  \bibinfo{booktitle}{\emph{Proceedings of ICCD}}. IEEE,
  \bibinfo{pages}{429--436}.
\newblock


\bibitem[\protect\citeauthoryear{Marakkalage, Riener, and
  De~Micheli}{Marakkalage et~al\mbox{.}}{2021}]%
        {marakkalage2021}
\bibfield{author}{\bibinfo{person}{Dewmini~Sudara Marakkalage},
  \bibinfo{person}{Heinz Riener}, {and} \bibinfo{person}{Giovanni De~Micheli}.}
  \bibinfo{year}{2021}\natexlab{}.
\newblock \showarticletitle{Optimizing Adiabatic Quantum-Flux-Parametron
  ({AQFP}) Circuits using Exact Methods}. In
  \bibinfo{booktitle}{\emph{Proceedings of IWLS}}.
\newblock


\bibitem[\protect\citeauthoryear{Muroga, Toda, and Takasu}{Muroga
  et~al\mbox{.}}{1961}]%
        {Muroga61}
\bibfield{author}{\bibinfo{person}{Saburo Muroga}, \bibinfo{person}{Iwao Toda},
  {and} \bibinfo{person}{Satoru Takasu}.} \bibinfo{year}{1961}\natexlab{}.
\newblock \showarticletitle{Theory of majority decision elements}.
\newblock \bibinfo{journal}{\emph{Journal of the Franklin Institute}}
  \bibinfo{volume}{271}, \bibinfo{number}{5} (\bibinfo{year}{1961}),
  \bibinfo{pages}{376--418}.
\newblock


\bibitem[\protect\citeauthoryear{Pasandi and Pedram}{Pasandi and
  Pedram}{2018}]%
        {pasandi2018pbmap}
\bibfield{author}{\bibinfo{person}{Ghasem Pasandi} {and}
  \bibinfo{person}{Massoud Pedram}.} \bibinfo{year}{2018}\natexlab{}.
\newblock \showarticletitle{{PBMap}: A path balancing technology mapping
  algorithm for single flux quantum logic circuits}.
\newblock \bibinfo{journal}{\emph{IEEE Transactions on Applied
  Superconductivity}} \bibinfo{volume}{29}, \bibinfo{number}{4}
  (\bibinfo{year}{2018}), \bibinfo{pages}{1--14}.
\newblock


\bibitem[\protect\citeauthoryear{Saito, Ayala, Chen, Tanaka, Tamura, and
  Yoshikawa}{Saito et~al\mbox{.}}{2021}]%
        {saito2021logic}
\bibfield{author}{\bibinfo{person}{Ro Saito}, \bibinfo{person}{Christopher~L
  Ayala}, \bibinfo{person}{Olivia Chen}, \bibinfo{person}{Tomoyuki Tanaka},
  \bibinfo{person}{Tomohiro Tamura}, {and} \bibinfo{person}{Nobuyuki
  Yoshikawa}.} \bibinfo{year}{2021}\natexlab{}.
\newblock \showarticletitle{Logic synthesis of sequential logic circuits for
  adiabatic quantum-flux-parametron logic}.
\newblock \bibinfo{journal}{\emph{IEEE Transactions on Applied
  Superconductivity}} \bibinfo{volume}{31}, \bibinfo{number}{5}
  (\bibinfo{year}{2021}), \bibinfo{pages}{1--5}.
\newblock


\bibitem[\protect\citeauthoryear{Soeken, Riener, Haaswijk, Testa, Schmitt,
  Meuli, Mozafari, and De~Micheli}{Soeken et~al\mbox{.}}{2019}]%
        {Soeken18libs}
\bibfield{author}{\bibinfo{person}{Mathias Soeken}, \bibinfo{person}{Heinz
  Riener}, \bibinfo{person}{Winston Haaswijk}, \bibinfo{person}{Eleonora
  Testa}, \bibinfo{person}{Bruno Schmitt}, \bibinfo{person}{Giulia Meuli},
  \bibinfo{person}{Fereshte Mozafari}, {and} \bibinfo{person}{Giovanni
  De~Micheli}.} \bibinfo{year}{2019}\natexlab{}.
\newblock \showarticletitle{The {EPFL} logic synthesis libraries}.
\newblock \bibinfo{journal}{\emph{arXiv preprint arXiv:1805.05121v2}}
  (\bibinfo{year}{2019}).
\newblock


\bibitem[\protect\citeauthoryear{Takeuchi, Nagasawa, China, Ando, Hidaka,
  Yamanashi, and Yoshikawa}{Takeuchi et~al\mbox{.}}{2017}]%
        {takeuchi17fourphase}
\bibfield{author}{\bibinfo{person}{Naoki Takeuchi}, \bibinfo{person}{Shuichi
  Nagasawa}, \bibinfo{person}{Fumihiro China}, \bibinfo{person}{Takumi Ando},
  \bibinfo{person}{Mutsuo Hidaka}, \bibinfo{person}{Yuki Yamanashi}, {and}
  \bibinfo{person}{Nobuyuki Yoshikawa}.} \bibinfo{year}{2017}\natexlab{}.
\newblock \showarticletitle{Adiabatic quantum-flux-parametron cell library
  designed using a 10 kA cm$^{-2}$ niobium fabrication process}.
\newblock \bibinfo{journal}{\emph{Superconductor Science and Technology}}
  \bibinfo{volume}{30}, \bibinfo{number}{3} (\bibinfo{year}{2017}),
  \bibinfo{pages}{035002}.
\newblock


\bibitem[\protect\citeauthoryear{Takeuchi, Nozoe, He, and Yoshikawa}{Takeuchi
  et~al\mbox{.}}{2019}]%
        {takeuchi19delayline}
\bibfield{author}{\bibinfo{person}{Naoki Takeuchi}, \bibinfo{person}{Mai
  Nozoe}, \bibinfo{person}{Yuxing He}, {and} \bibinfo{person}{Nobuyuki
  Yoshikawa}.} \bibinfo{year}{2019}\natexlab{}.
\newblock \showarticletitle{Low-latency adiabatic superconductor logic using
  delay-line clocking}.
\newblock \bibinfo{journal}{\emph{Applied Physics Letters}}
  \bibinfo{volume}{115}, \bibinfo{number}{7} (\bibinfo{year}{2019}),
  \bibinfo{pages}{072601}.
\newblock


\bibitem[\protect\citeauthoryear{Takeuchi, Ozawa, Yamanashi, and
  Yoshikawa}{Takeuchi et~al\mbox{.}}{2013}]%
        {takeuchi13aqfp}
\bibfield{author}{\bibinfo{person}{Naoki Takeuchi}, \bibinfo{person}{Dan
  Ozawa}, \bibinfo{person}{Yuki Yamanashi}, {and} \bibinfo{person}{Nobuyuki
  Yoshikawa}.} \bibinfo{year}{2013}\natexlab{}.
\newblock \showarticletitle{An adiabatic quantum flux parametron as an
  ultra-low-power logic device}.
\newblock \bibinfo{journal}{\emph{Superconductor Science and Technology}}
  \bibinfo{volume}{26}, \bibinfo{number}{3} (\bibinfo{year}{2013}),
  \bibinfo{pages}{035010}.
\newblock


\bibitem[\protect\citeauthoryear{Takeuchi, Yamanashi, and Yoshikawa}{Takeuchi
  et~al\mbox{.}}{2015}]%
        {takeuchi15library}
\bibfield{author}{\bibinfo{person}{Naoki Takeuchi}, \bibinfo{person}{Yuki
  Yamanashi}, {and} \bibinfo{person}{Nobuyuki Yoshikawa}.}
  \bibinfo{year}{2015}\natexlab{}.
\newblock \showarticletitle{Adiabatic quantum-flux-parametron cell library
  adopting minimalist design}.
\newblock \bibinfo{journal}{\emph{Journal of Applied Physics}}
  \bibinfo{volume}{117}, \bibinfo{number}{17} (\bibinfo{year}{2015}),
  \bibinfo{pages}{173912}.
\newblock


\bibitem[\protect\citeauthoryear{Testa, Lee, Riener, and De~Micheli}{Testa
  et~al\mbox{.}}{2021}]%
        {testa2021algebraic}
\bibfield{author}{\bibinfo{person}{Eleonora Testa}, \bibinfo{person}{Siang-Yun
  Lee}, \bibinfo{person}{Heinz Riener}, {and} \bibinfo{person}{Giovanni
  De~Micheli}.} \bibinfo{year}{2021}\natexlab{}.
\newblock \showarticletitle{Algebraic and {Boolean} optimization methods for
  {AQFP} superconducting circuits}. In \bibinfo{booktitle}{\emph{Proceedings of
  ASP-DAC}}. IEEE, \bibinfo{pages}{779--785}.
\newblock


\bibitem[\protect\citeauthoryear{Yang}{Yang}{1991}]%
        {mcnc}
\bibfield{author}{\bibinfo{person}{Saeyang Yang}.}
  \bibinfo{year}{1991}\natexlab{}.
\newblock \bibinfo{booktitle}{\emph{Logic synthesis and optimization benchmarks
  user guide: version 3.0}}.
\newblock \bibinfo{publisher}{Microelectronics Center of North Carolina
  (MCNC)}.
\newblock


\end{thebibliography}

\end{document}